\theoremstyle{definition}
\newtheorem{definition}{Definition}
\theoremstyle{plain}
\newtheorem{proposition}{Proposition}
\title{\LARGE \bf
Between-Ride Routing for Private Transportation Services
}
\author{Ian Schneider$^{1}$, Jun Jie Joseph Kuan$^{2}$, Mardavij Roozbehani$^{3}$, Munther Dahleh$^{3}$
\thanks{$^{1}$Institute for Data, Systems, and Society, Massachusetts Institute of Technology (MIT)
        {\tt\small ischneid@mit.edu}}%
\thanks{$^{2}$Department of Electrical Engineering \& Computer Science, MIT}%
\thanks{$^{3}$Laboratory for Information and Decision Systems, MIT}%
}
\begin{document}

\maketitle
\thispagestyle{empty}
\pagestyle{empty}

\begin{abstract}

Spurred by the growth of transportation network companies and increasing data capabilities, vehicle routing and ride-matching algorithms can improve the efficiency of private transportation services. However, existing routing solutions do not address where drivers should travel after dropping off a passenger and before receiving the next passenger ride request, i.e., during the between-ride period. We address this problem by developing an efficient algorithm to find the optimal policy for drivers between rides in order to maximize driver profits. We model the road network as a graph, and we show that the between-ride routing problem is equivalent to a stochastic shortest path problem, an infinite dynamic program with no discounting. We prove under reasonable assumptions that an optimal routing policy exists that avoids cycles; policies of this type can be efficiently found. We present an iterative approach to find an optimal routing policy.
Our approach can account for various factors, including the frequency or passenger ride requests at different locations, traffic conditions, and surge pricing. We demonstrate the effectiveness of the approach by implementing it on road network data from Boston and New York City. 

\end{abstract}

\section{INTRODUCTION}

Advances in information technology and decision theory are helping revolutionize the market for private transportation services. Transportation Network Companies (TNCs) like Lyft and Uber utilized internet-enabled ride requests to quickly grow their market share in private transportation. From 2014 to 2016, Uber's market share for ride-hailing services rose from 18\% to 75\% in the United States 
 \cite{Martha:2016}. Policy efforts increasingly support the concept of mobility as a service \cite{jittrapirom2017mobility}, increasing the economic importance of private transportation services. However, TNCs are also associated with concerns related to increased congestion and low driver wages.  

Tools from decision theory can utilize new sources of consumer and driver data to help improve the efficiency of private transportation services. TNCs typically feature automated passenger ride matching and prices that vary geographically and temporally. Higher prices during peak demand periods are commonly referred to as surge prices. There is new opportunity for developing efficient algorithms to optimize surge pricing, improve system efficiency, and to help increase driver wages. 

In this paper, we consider the situation where a driver seeks to maximize the expected value of their profits by optimizing routing decisions during the between-ride period, the time period after a driver drops off a passenger and before they receive their next ride request.

We formulate this problem as an undiscounted dynamic program with an uncertain number of decision stages; our formulation is equivalent to a stochastic shortest path problem. We prove that the between-ride problem has special characteristics; given reasonable assumptions, it can be efficiently solved as a dynamic program with a finite number of stages.  
We implement our algorithm using road network data from Boston and New York City, demonstrating that our approach is practical and scalable. Our implemented algorithm can directly advise drivers between rides, which could bring substantial benefits to transportation efficiency. It can also help increase driver wages and reduce costs.

Current research approaches regarding algorithms for private transportation services focus on the issue of driver-passenger matching.
They encompass a wide variety of scenarios, including but not limited to: matching algorithms for ride-sharing and carpooling services \cite{agatz2012optimization} \cite{SCHREIECK2016272} \cite{kleiner2011mechanism}, private transportation services with ride-sharing \cite{alonso2017demand}, and generalizations of the ride-matching problem to cases where passengers might be asked to transfer over to another vehicle in the middle of their journey \cite{masoud2017decomposition}. 
Algorithms for driver-passenger matching typically focus on the period after passengers have made a trip request. Dispatch algorithms can move drivers to higher value areas, but they do not provide routing suggestions to private drivers to optimize their personal profits. 

Our approach complements the aforementioned work on ride-matching: our algorithm can be used to navigate drivers towards areas with high probability of new passenger requests by providing routing directions that align with their profit motive. These results can improve the effectiveness of existing algorithms by helping move unmatched drivers to higher value locations where they are more likely to be matched to a ride when a ride request is issued.

The issue of location-based and distance-based pricing policy is also an active area of research. Research demonstrates how spatial pricing policy can be implemented to achieve better matching between supply of drivers and demand of passengers \cite{bimpikis2016spatial}. 
Another study investigates the differences between dynamic and static pricing strategies \cite{banerjee2015pricing}. Each of these papers focus on pricing strategies, and drivers are assumed to move efficiently towards areas with higher demand and prices. It is unlikely that drivers are able to make optimal between-ride decisions without the assistance of routing technology; the between-ride routing problem is complicated, requiring the synthesis of multiple data sources. 

Again, our proposed algorithm complements existing research: it provides optimal paths for drivers between rides to maximize driver profits, given an existing pricing policy. If a pricing policy is efficient, our approach can ensure that the benefits of the pricing policy are attained. The combination of research areas can help create a more coordinated network system, reducing waiting time for passengers and allowing more driver-passenger matches to occur.


Section \ref{model} describes the model and formulates the between-ride routing problem as a stochastic shortest path problem, a type of dynamic program. Section \ref{results} proves that we can solve the between-ride routing problem by using a finite-horizon dynamic program, where the number of stages is no greater than the number of nodes in the transportation network. Section \ref{sec:alg} presents a practical algorithm for finding an optimal solution to the between-ride routing problem, and Section \ref{sec:implementation} describes an implementation of our approach using road network data from Boston and New York City. 

\section{Technical Model} \label{model}

This section details the model for optimizing a between-ride route in order to maximize the expected value of profits for a driver. First, we detail the relevant parameters. Second, we explain the probability model whereby a driver receives ride requests at a particular location according to an exponential distribution. We specify the expected value of a route along points $x \in R^2$ on a 2-dimensional map. In the following Subsection \ref{model2}, we present a model of the road network as a directed, connected graph. We take the parameters of interest to be constant along each edge. We explain the driver objective function and optimization problem in terms of discrete decisions on the graph.  

Consider a specific driver that does not currently have a passenger. Let $w$ be the driver's wage rate, i.e. the value of their time. Let $f$ be their fuel and vehicle cost per unit distance driven. 

For each location $x$ on the map, let $R_x$ be the expected value of the profit from a ride request the driver receives at location $x$. We assume that the random profit from the ride accounts for its various features, including length, price, and duration. For tractability, we assume that rides are undifferentiated aside from the expected value of their profit. In practice, ride opportunities can vary in other ways as well. For instance, rides at some locations could be more likely to end at high-value locations, which would increase the expected value of profit from subsequent rides. In practice, this could be incorporated into these results by adjusting $R_x$ to account for relevant characteristics, but a proper formulation would be non-myopic with regard to the value of subsequent rides. Future research could focus on the case where the value of subsequent rides is directly incorporated to the model; this would lead to an interesting formulation over multiple potential rides. 

Furthermore, let $Q_x$ be the the pickup rate at location $x$, i.e. the expected value of the number of ride requests per minute. The values $R_x$ and $Q_x$ are indexed by $x$ because they can vary at different points in the transportation network. These variables can change over time, but the algorithm assumes that they are static over the course of the between-ride decision making. This is a reasonable assumption given the short duration of between-ride routing; research from Denver suggests that the average between-ride period is less than 12 minutes, with a median of 7.5 minutes \cite{henao2017impacts}. If inputs change or shocks occur, the algorithm can be rerun to optimize the remaining route during the between-ride period. This allows drivers to respond to real-time changes in demand and congestion.

Our model assumes that $R_x$ and $Q_x$ are not influenced by the driver's route. It would also be interesting to consider the case where drivers actions directly influence the price and demand for rides, for instance in a game-theoretic or mean-field model.  

At any position $x$, we assume that ride requests arrive according to a Poisson process, and we model passenger ride requests via an exponential distribution. Consider a driver at position $x(0)$ who travels along the trajectory $x(t)$ in continuous time, and let $M$ be the random variable of the time the driver receives their first ride request along route $x(t)$. Then
\begin{equation}
\label{eqn:2}
\mathbb{P}(M \geq t) = \exp⁡ \Big(\int_{0}^{t} -Q_{x(s)}\ ds\Big)
\end{equation}

In general, we assume the drivers must accept their first ride request, as is commonly required for drivers in TNCs. Let $J$ be the expected value of the profit through the next ride period for the between-ride driver on trajectory $x(t)$ at each time $t$. $J$ is given by
\begin{equation}
\label{eqn:1}
J = \int_{0}^{\infty} \Big( R_{x(t)}  Q_{x(t)} -w-f \frac{dx}{dt} \Big) \mathbb{P}(M \geq t) \ dt
\end{equation}

In Equation \ref{eqn:1}, the first term $R_{x(t)}  Q_{x(t)}$  corresponds to the expected revenue earned at $x(t)$, taking into account the likelihood of receiving a passenger match while at $x(t)$. The second term $w$ is the cost due to time spent waiting, while the third term $f \frac{dx}{dt}$ represents fuel costs. The trajectory $x(t)$ is differentiable, so $\frac{d x(t)}{d t}$ exists and is finite for all $t$. It represents the speed traveled along the route, so it accounts for local vehicle speeds and congestion. 

From Equations \eqref{eqn:2} and \eqref{eqn:1} we obtain the value of a driver waiting stationary at a particular location $z$, i.e. with $x(t) = z$ for all $t$:

\begin{equation}
\label{eqn:3}
\tilde{J} = \int_{0}^{\infty} (R_z Q_z-w) e^{-t Q_z }\ dt = R_z-\frac{w}{Q_z}
\end{equation}
This expression is intuitive: if a driver waits at location $z$ until they receive a passenger match, they receive a ride match eventually with probability $1$. Thus, their expected revenue is $R_z$. The cost of waiting is their wage rate $w$ times the expected value of the amount of time until they receive a passenger match $Q_{z}^{-1}$.

In general, a driver at location $x(0)$ seeks to choose a trajectory $x(t)$ to maximize \eqref{eqn:1}. In the following section, we will model the road network as a graph and formulate the decision problem as a dynamic program in discrete time and space. As we will show, this leads to a tractable decision problem that can be efficiently solved. 

\setcounter{figure}{0}
\begin{figure}
\begin{center}
\includegraphics[width=0.75\columnwidth]{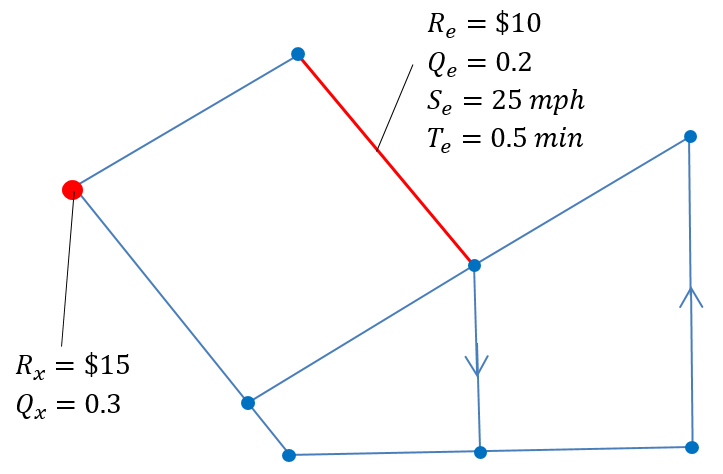}
\caption{Model of a simple road network.}
\label{fig:1}
\end{center}
\end{figure}

\subsection{Road Network Model} \label{model2}
To solve the between-ride routing problem, we model the road network as a directed, connected graph $G = (N, E)$. Each edge $e \in E$ represents a section of a road, while the set of nodes $N$ includes, but is not limited to, all road intersections.\footnote{Nodes can also be used to model a specified point along a road segment where there is no intersection with other roads. These additional nodes are equivalent to intersections with only 1-2 options for directed travel: continue straight, or (possibly) make a u-turn.} Let $|N| = n$. Note that $E$ includes all loops; i.e. $\forall i \in N$, $(i,i) \in E$. A driver on the edge $(i,i)$ corresponds to the action of a driver waiting at node $i$. 

The driver seeks to choose a route to maximize their expected profit over an infinite horizon:
\begin{equation} \label{DP}
	\max\limits_{\mu_k(x_k)} \lim\limits_{N \rightarrow \infty} \mathbb{E} [\sum\limits_{k=0}^{N - 1} g(x_k, x_{k+1}, \mu_k(x_k))]
\end{equation}

We can formulate the between-ride routing problem \eqref{DP} as a stochastic shortest path problem\footnote{Stochastic shortest path problems were first formulated by \cite{eaton1962optimal}. Existing research \cite{bertsekas1991analysis} extends the analysis to the case where transition values may be positive or negative, which is helpful for our analysis. The two-volume textbook \cite{Bertv1}\cite{Bertv2} provides additional information on stochastic shortest path problems in each volume.}, which is an infinite horizon dynamic program and a type of Markovian Decision Problem (MDP). Let the state space $X = N \cup \{m\}$, where we augment the set of nodes $N$ with a terminal state $m$. When a driver moves to state $m$, we say that they are no longer in the between-ride routing period, because they have either found a rider or stopped offering rides. Once the driver reaches state $m$, they derive zero additional cost or value, so the decision problem equivalently terminates.

The function $g(x_k, x_{k+1}, \mu_k(x_k))$ represents the value of transitioning to state $x_{k+1}$ from state $x_k$ after choosing policy $\mu_k$, which prescribes an action for each state in each stage of the decision problem. It includes the potential value of receiving a ride match, less the costs associated with the driver's time and fuel costs. The transition probabilities are stationary, and they are given by $p_{ij}(\mu) = \mathbb{P}(x_{k+1} = j | x_k = i, \mu_k = \mu )$. 

At a specific state $i \in X$, the driver chooses the next location $j$ with edge $(i,j) \in E$. For example, at node $i \in N$, the driver could choose to stay at node $i$, turn right to travel to node $k$, or turn left to travel to node $\ell$ if $(i,i), (i,k), (i,\ell) \in E$. If $j \neq i$, then the time to traverse the edge from $i$ to $j$ is not a decision variable; it is given by the speed of traffic. If $j = i$, then we say that the driver is waiting at node $i$, and the driver can choose exactly how long to wait at node $i$ before making a subsequent decision. 

Formally, the driver chooses an action $u(i) \in U(i)$ where $U(i)$ is the set of admissible actions at state $i$. The action $u(i)$ is a double, i.e. $u(i) = (u_1, u_2)$. The driving decision $u_1 \in \{j \in N | (i,j) \in E\} \cup \{m\}$ is the driver's choice of the next location. The waiting or travel time $u_2$ is chosen from the set $u_2 \in (0,\infty]$ if $u_1 = i$. Otherwise, $u_2 = T_{(i,j)}$. $T_{(i,j)}$ is the time required to travel from node $i$ to $j$ along edge $(i,j)$ at the current speed of traffic, taking into account traffic and congestion.

Additionally, for all $i \in N$, there is an admissible action $u(i)$ with $u_1 = m$. This describes the case where the driver stops searching for another ride request and stops offering rides. This can happen, for instance, at the the end of a shift or if prices are too low for the driver to keep searching for the next ride request. 


Consider the transition probabilities and profits for a driver at node $i \in N$. Let the driver elect action $u(i) = (j,t) \in U(i)$ with $e = (i,j) \in E$. If $j = i$, then $t$ is a decision variable; otherwise, $t = T_e$. Along edge $e$, the constant $Q_e > 0 $ represents the arrival rate for ride requests. Then, from \eqref{eqn:2} 

\begin{equation} \label{MDP_prob}
p_{ik}(u) = 
	\begin{cases}
	1-e^{- t Q_e} & k = m \\
	e^{- t Q_e} & k = j \\
	0 & \text{otherwise} 
	\end{cases}
\end{equation}


Remember that if $j \neq i$, then $t = T_e$, because drivers must move at the speed of traffic along a particular edge.\footnote{The original map can be augmented with a node in the middle of edges with curbside parking or waiting zones, in order to model the case where drivers can stop and wait along some edges.} In this case, there is a probability $1-e^{- Q_{e} T_{e}}$ that they will be matched with a ride; otherwise, they will move to node $j$ for the next decision stage after a duration $T_e$.  

A driver making the decision $u(i) = (j,t)$ at node $i$ receives profit
\begin{equation} \label{g}
g(i,k,u) = 
\begin{cases}
	R_e - \phi_e(t) (w + f S_e) & k = m \\
	- t (w + f S_e) & k = j \\
	0 & \text{otherwise} 
	\end{cases}
\end{equation}
As before, the values $w$ and $f$ refer to the driver's fixed wage rate and fuel/vehicle cost. The expected value of the ride revenue for every edge $e \in E$ is known and stationary in the time period of interest, and given by $R_e > 0 $. The driver drives at a constant speed $S_e$ along each edge (which is directly the edge distance divided by $T_e$). 

The variable $\phi_e(t)$ is the expected value of the time until a match occurs along edge $e$, when the edge is traversed over a time period $t$, conditional on the fact that a match does occur in $[0,t]$; $\phi_e(t) = \mathbb{E}[M_e | M_e < t]= Q_e^{-1} + t(1 - e^{Q_e t})^{-1}$.

%

From \eqref{g}, we can write the expected value of the profit at any stage $i$ according to the transition probabilities and duration associated with the chosen action $u(i) = (j,t)$, again with $e = (i,j)$: 
\begin{equation} \label{exprev}
	\begin{aligned}
		g (i,u) & = \mathbb{E} [g(i,k,u)]  \\
					  & = (1 - e^{-t Q_e}) (R_e - \frac{w + f S_e}{Q_e})
	\end{aligned}
\end{equation}
Note that $S_h = 0$ for loops $h$, i.e. edges of the form $h =(i,i)$. The idea is that the only cost for drivers when they are waiting is due to their time, not due to gas or other per-distance vehicle costs. When $u_2 = \infty$, we evaluate the associated transition probabilities and profits as the limits of the provided equations when $t$ goes to infinity; these limits exist for each of the provided expressions in \eqref{MDP_prob}, \eqref{g}, and \eqref{exprev}. 





Since $m$ is the terminal node, $p_{m m}(u) = 1$ and $g(m,t)(u) = 0$ for any $u \in U(m)$. The driver receives value when they transition to $m$ randomly by receiving a ride request. Once a driver reaches $m$, they have accepted a ride request or stopped offering rides. They receive no additional value and the decision problem ends. As explained previously, at any node $i \in N$, the driver can select to stop offering rides; mechanistically, this is performed as an action to move directly to the terminal node without any reward. For $u(i)$ with $u_1 = m$, $p_{im}(u) = 1$ and $g(i,u) = 0$. The idea is that the driver can elect to stop offering rides at any time, which incurs no further cost but eliminates the opportunity of collecting revenue from a potential ride. 

Going forward, we assume that local maxima of $R_e - \frac{w}{Q_e}$ are defined as nodes. Consider an edge $e = (j,k)$ with $j \neq k$. Then 
\begin{equation} \label{assume1}
	R_e - \frac{w}{Q_e} \leq \max\limits_{i = \{j,k\}} \{R_{(i,i)} - \frac{w}{Q_{(i,i)}}\}
\end{equation}
From any input data, it is straightforward to ensure that a map meets the required assumption by adding a node along or in the middle of any edge that has value $R_e - \frac{w}{Q_e}$ greater than that value for each of their adjacent nodes.  

The policy $\mu$ defines actions $u(i) \in U(i)$ for each state $i \in X$ in each decision period $k$.  In practice, the policy $\mu = (\mu_1,\mu_2,...)$ can vary in each decision stage. Therefore, the action taken at node $i$ in stage $k$ under policy $\mu$ is $\mu_k(i) = (\mu_{k,1}(i), \mu_{k,2}(i))$. However, we focus on stationary policies where $\mu = (\mu, \mu, ...)$ at each stage. As we will show, at least one of these stationary policies is optimal, justifying our narrow focus. Due to the focus on stationary policies, in the subsequent Section we drop the stage $k$ subscript from our notation. Instead, $\mu$ refers to stationary policies and $\mu_1(i)$ and $\mu_2(i)$ refer to the actions (direction and waiting time) associated with policy $\mu$ at node $i \in N$.

\setcounter{figure}{1}
\begin{figure*}[t]
\begin{center}
\includegraphics[width=1.8\columnwidth]{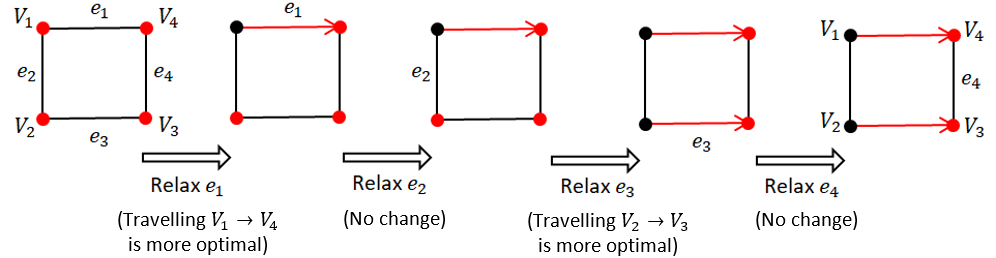}
\caption{First few steps of our algorithm on a simple road network. Red arrows indicate optimal paths found so far. Initially, the algorithm sets all drivers to stay where they are. By the end of the fourth step, the algorithm dictates that drivers at $V_1$ and $V_2$ should go to $V_4$ and $V_3$ respectively, while drivers at $V_4$ and $V_3$ stay where they are.}
\label{fig:2}
\end{center}
\end{figure*}

\section{Results} \label{results}

Optimal policies for infinite horizon dynamic programs like \eqref{DP} are typically solved using convergence of value iteration (VI) algorithms \cite{Bertv1}. In general, this can lead to sub-optimal results despite extensive computation periods; this could limit their value for the between-ride routing problem, since driving suggestions should ideally be provided to multiple drivers in a network and available very quickly after a driver drops off their previous passenger and enters the between-ride period. However, in this section, we show that the between-ride routing problem has special structure that guarantees that allows it to be efficiently solved as a finite-horizon dynamic program with $n$ stages. 

Since every node and edge has a non-zero probability of a match,\footnote{We could relax this assumptions, but in practice a match is possible anywhere, because drivers can be matched with passengers that are near their location. A driver can receive a match in a nearby residential neighborhood, for instance, while traveling along a highway.} then for any policy, there is always a positive probability that the termination state will be reached. 

Then as explained by \cite{Bertv1}, optimal values $J^*(1),...J^*(n)$ at nodes $i \in N$ indexed by $1,...n$ satisfy Bellman's Equation:
\begin{equation} \label{Bellman}
	J^*(i) = \min\limits_{u \in U(i)} \{g(i,u) + \sum_{j=1}^{n} p_{ij}(u) J^*(j)\} \quad i = 1,...n
\end{equation}

First, we will show that there is exists optimal stationary policy where for all states $i$, $\mu(i)$ obtains the minimum of \eqref{Bellman} and where the driver waits until they receive a ride request at any node where they choose to wait. Then we will show that this optimal stationary policy has no cycles, i.e. $\mathbb{P}(x_k = i, x_{\ell} = i) = 0$ for any states $i \in N$, and stages $k, \ell \in \mathbb{N}$. Together, this implies that an optimal policy can be found using a type of value iteration algorithm requiring at most $n+1$ steps. 

\begin{proposition} \label{stationarypolicy}
	There exists a stationary policy $\mu^*$ that is optimal for the decision problem \eqref{DP}. Under this stationary policy, whenever a driver waits at a specific node, they plan to remain at that node until they receive a ride request, i.e. $\forall i \in N$,  $\mu^*_1(i) = i$ implies $\mu^*_2(i) = \infty$. 
\end{proposition}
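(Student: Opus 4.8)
The plan is to read \eqref{DP} as the stochastic shortest path (SSP) problem set up in Section~\ref{model} (with profits being maximized), lean on the existence theory for such problems as in \cite{bertsekas1991analysis,Bertv2} after checking its hypotheses, and then pin down the waiting times by a direct computation on Bellman's equation \eqref{Bellman}. First I would verify that the SSP is well posed: each per-stage profit $g(i,u)$ is bounded (all of $R_e,Q_e,w,f,S_e$ are finite with $Q_e>0$), and every stationary policy is \emph{proper}, i.e. reaches the terminal state $m$ with probability one. Properness holds because at each node the per-stage survival probability is $e^{-tQ_e}<1$ for any finite wait and $0$ for the wait-forever action, so the probability of never terminating decays geometrically along any trajectory; this is exactly the observation already recorded just before \eqref{Bellman}. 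With properness and bounded costs, the cited theory delivers a finite optimal value $J^*$ solving \eqref{Bellman} and guarantees that any stationary policy attaining the optimum in \eqref{Bellman} at every state is optimal.

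The hypothesis that needs care — and the step I expect to be the main obstacle — is that the action set is not finite, since the waiting time ranges over $(0,\infty]$, whereas the standard theorems require the optimum in \eqref{Bellman} to be attained. Here one computation does double duty. For the self-loop $e=(i,i)$ (so $S_e=0$), writing $\alpha=e^{-tQ_e}$ and using \eqref{exprev}, the value of the wait-for-time-$t$ action inside \eqref{Bellman} is
\begin{equation*}
	(1-\alpha)\Big(R_e-\tfrac{w}{Q_e}\Big)+\alpha\,J^*(i)=\tilde J_i+\alpha\big(J^*(i)-\tilde J_i\big),
\end{equation*}
where $\tilde J_i:=R_{(i,i)}-w/Q_{(i,i)}$ is the stationary waiting value \eqref{eqn:3}. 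This is affine in $\alpha$, so over $\alpha\in[0,1)$ its supremum is approached only at the endpoints of the closure: $\alpha=0$ (the wait-forever action $t=\infty$, value $\tilde J_i$) or the limit $\alpha\to1$ ($t\to0$, value $J^*(i)$, which is in any case realized by whichever action achieves $J^*(i)$). Hence only the wait-forever action together with the finitely many driving actions (including the move to $m$) can realize $J^*(i)$; this reduces \eqref{Bellman} to a finite maximization whose optimum is attained, legitimizing the existence theorem and producing a stationary optimal $\mu^*$. The delicate point to handle rigorously is the mild circularity here: I would carry out the reduction on the finite-action problem and then verify that its value function solves the \emph{full} Bellman equation (finite waits never beat the retained actions), so the two optima coincide.

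Finally I would extract the waiting claim from the same identity. The action $(i,\infty)$ is always admissible and contributes $\tilde J_i$, so $J^*(i)\ge\tilde J_i$ at every node. Now suppose $\mu^*_1(i)=i$ with some finite waiting time $t$; optimality forces the displayed value to equal $J^*(i)$, i.e. $\tilde J_i+\alpha(J^*(i)-\tilde J_i)=J^*(i)$ with $\alpha=e^{-tQ_e}<1$, which yields $J^*(i)=\tilde J_i$. But then the wait-forever action also attains $J^*(i)=\tilde J_i$, so replacing $\mu^*_2(i)$ by $\infty$ preserves optimality. Carrying out this substitution at every node where the driver waits gives an optimal stationary $\mu^*$ with $\mu^*_1(i)=i\Rightarrow\mu^*_2(i)=\infty$. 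The one thing to confirm is that modifying several nodes simultaneously still leaves a policy attaining the optimum of \eqref{Bellman} at every state — which holds because each modified action individually attains the unchanged value $J^*(i)$, and properness of the modified policy is immediate since wait-forever nodes terminate in a single step.
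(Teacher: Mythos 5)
Your proposal is correct and takes essentially the same route as the paper's own proof: both arguments hinge on reducing the continuum of waiting times $(0,\infty]$ to the finite action set consisting of wait-forever plus the driving moves, using the same memorylessness algebra (the paper solves the stationary-policy fixed point $J_{\mu}(i) = (1-\alpha)\tilde{J}_i + \alpha J_{\mu}(i)$ to show the value is independent of the finite wait, while you observe the Bellman value $\tilde{J}_i + \alpha\bigl(J^*(i)-\tilde{J}_i\bigr)$ is affine in $\alpha$ so finite waits are dominated and attain the optimum only when $J^*(i)=\tilde{J}_i$), and both then invoke the standard stochastic-shortest-path theory to get a stationary optimal policy for the finite-action problem and hence for \eqref{DP}. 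If anything, you are more explicit than the paper about properness, boundedness, and the attainment/circularity subtlety, but the decomposition and the key computation are the same.
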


\begin{proof}
	First, let $\tilde{U}(i)$ define a restricted action space for node $i$. We say that $u(i) = (u_1, u_2) \in \tilde{U}(i)$ if $u_1 \in \{j \in N | (i,j) \in E\} \cup \{m\}$ and $u_2 = \infty$ if $u_1 = i$ and $u_2 = T_{(i,j)}$ if $u_2 = j \neq i$. Note that $\forall i \in N$, $\tilde{U}(i) \subset U(i)$. This restricted action space implies that whenever a driver chooses to wait at a node, they will wait at that node until they receive a ride request.
	
	The values of $J^*(1),...J^*(n)$ given by $\eqref{Bellman}$ are unchanged if we replace the constraint sets $U(i)$ with $\tilde{U}(i)$. To see this, fix $J^*(1),...J^*(n)$ according to \eqref{Bellman}, and for each $i$ fix a set of actions $\mu^*(i)$ that attain the optimal values of \eqref{Bellman} for each state $i$. Define the set $I = \{i | \mu^*_1(i) = i\}$. Define an additional policy $\mu$ such that $\forall i \notin I$, $\mu(i) = \mu^*(i)$, and $\forall i \in I$, $\mu_1(i) = \mu^*_1(i) = i$ but  $\mu_2(i) = \infty$. 
	
	For an arbitrary policy $\tilde{\mu}$, let there be some $i\in N$ with $\tilde{\mu}_1(i) = i$ and $\tilde{\mu}_2(i) = t_i$. Let $\ell = (i,i)$. Then $\forall i$ with $\tilde{\mu}_1(i) = i$, 
	\begin{equation} \label{lemma_eq1}
	J_{\tilde{\mu}}(i) =  (1 - e^{-t_i Q_\ell}) (R_\ell - \frac{w}{Q_\ell}) + e^{- t_i Q_\ell} J_{\tilde{\mu}}(i) 
	\end{equation}
	due to equations \eqref{MDP_prob} and \eqref{exprev} and by definition of the terminal state with $J(m) = 0$. This implies that $\forall i \in I$
	\begin{equation} \label{lemma_eq2}
	J_{\mu^*}(i) = R_{\ell} - \frac{w}{Q_{\ell}} = J_\mu(i)
	\end{equation}
	The first equality is found by evaluating \eqref{lemma_eq1} for $\tilde{\mu}(i) = \mu^*(i)$ and algebraically solving for $J_{\mu^*}(i)$. The second equality is found by evaluating \eqref{lemma_eq1} for $\mu(i)$ by taking the limit of the right hand side as $t_i \rightarrow \infty$.
	
	Subsequently, this implies that $\forall i \notin I$, $J_{\mu^*}(i) = J_\mu(i)$, since for each $i \notin I$, the objective function in $\eqref{Bellman}$ is unchanged. Therefore, we can restrict the decision space to $\tilde{U}_i$, and the same $J^*(1),...J^*(n)$ as above satisfy
	\begin{equation} \label{Bellman2}
	J^*(i) = \min\limits_{u \in \tilde{U}(i)} \{g(i,u) + \sum_{j=1}^{n} p_{ij}(u) J^*(j)\} \quad i = 1,...n
	\end{equation} 
	Therefore, any policy $\mu$ that obtains the minimum in \eqref{Bellman2} also obtains the minimum in \eqref{Bellman}. 
	
	For each state $i$, the optimization problem in \eqref{Bellman2} describes a finite and discrete choice set with $|\tilde{U}(i)| \leq n + 1$. Therefore, there exists some $\mu$ such that for all $i \in N$, $\mu(i)$ obtains the maximum in \eqref{Bellman2}. This $\mu$ describes a stationary policy, because the objective function and constraints in \eqref{Bellman2} do not change across decision periods. We showed that $\mu$ also obtains the maximum of \eqref{Bellman}. As shown by \cite{Bertv1}, this implies that $\mu$ is an optimal policy for the decision problem \eqref{DP}. \qed
\end{proof}	
\begin{proposition} \label{simplepath2}
	There exists an optimal stationary policy $\mu$ that meets the characteristics of Proposition \ref{stationarypolicy} and has no cycles with probability 1, i.e. $\mathbb{P}(x_k = i, x_{k'} = i) = 0$ for any state $i \in N$ and decision stages $k,k' \in \mathbb{N}$.    
\end{proposition}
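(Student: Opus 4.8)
My plan is to start from the optimal stationary policy $\mu^*$ guaranteed by Proposition \ref{stationarypolicy} and reduce the no-cycle claim to a statement about the \emph{movement pointers} of the policy. Under the characteristics of Proposition \ref{stationarypolicy}, every node is of exactly one of two types: a \emph{waiting node} (where $\mu^*_1(i)=i$, $\mu^*_2(i)=\infty$), from which the driver reaches $m$ with probability $1$ in a single stage and therefore never recurs, or a \emph{moving node} (where $\mu^*_1(i)=j\neq i$), from which the driver either terminates with probability $1-\alpha_e$ or advances deterministically to $j$ with probability $\alpha_e:=e^{-Q_eT_e}\in(0,1)$, where $e=(i,j)$. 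Consequently a state $i$ can satisfy $\mathbb{P}(x_k=i,x_{k'}=i)>0$ only if the deterministic successor map $i\mapsto\mu^*_1(i)$ contains a directed cycle through $i$. Thus it suffices to exhibit an optimal policy (retaining the Proposition \ref{stationarypolicy} structure) whose movement-pointer graph is acyclic.

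The core step is a lemma stating that any movement cycle can be broken at no cost. Suppose $\mu^*$ has a cycle $i_1\to i_2\to\cdots\to i_r\to i_1$ with edges $e_k=(i_k,i_{k+1})$. Writing $B_e=R_e-\frac{w+fS_e}{Q_e}$ so that $g(i_k,\mu^*)=(1-\alpha_{e_k})B_{e_k}$ by \eqref{exprev}, the optimality of $\mu^*$ gives the cyclic system
\begin{equation} \label{cyclicrec}
J^*(i_k)=(1-\alpha_{e_k})\,B_{e_k}+\alpha_{e_k}\,J^*(i_{k+1}),\qquad k=1,\dots,r,
\end{equation}
so each $J^*(i_k)$ is a convex combination of $B_{e_k}$ and $J^*(i_{k+1})$. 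Let $M=\max_k J^*(i_k)$, attained at $i_{k^*}$. Since $J^*(i_{k^*+1})\le M$ and $1-\alpha_{e_{k^*}}>0$, \eqref{cyclicrec} forces $M\le B_{e_{k^*}}$. Assumption \eqref{assume1} then bounds $B_{e_{k^*}}$ by the larger of the two endpoint waiting values $R_{(i,i)}-\frac{w}{Q_{(i,i)}}$ (which equals the waiting value $J_{\mu}(i)$ of \eqref{lemma_eq2}, using $fS_e\ge 0$), while waiting is always a feasible action so each endpoint's waiting value is in turn dominated by $M$. Chaining these inequalities collapses them to equalities, yielding an endpoint $i^{\dagger}\in\{i_{k^*},i_{k^*+1}\}$ whose waiting value equals $M=J^*(i^{\dagger})$. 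I then redefine $\mu(i^{\dagger})$ to wait forever at $i^{\dagger}$: this action attains the Bellman optimum at $i^{\dagger}$ with the same $J^*$, so the modified policy still satisfies \eqref{Bellman} everywhere and remains optimal, while $i^{\dagger}$ becomes a waiting node and exits the cycle.

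Finally I would close with an induction over the number of moving nodes. Each application of the lemma converts one moving node into a waiting node without altering $J^*$ or the Proposition \ref{stationarypolicy} structure, so the set of moving nodes strictly shrinks; since $|N|=n$ is finite, after at most $n$ steps the movement-pointer graph contains no directed cycle. Along an acyclic pointer graph every realized trajectory visits a strictly decreasing chain of distinct nodes until it hits a waiting node or transitions to $m$, giving $\mathbb{P}(x_k=i,x_{k'}=i)=0$ for all $i\in N$ and $k\neq k'$, as required. The main obstacle is the value comparison of the second paragraph: the conclusion that a cycle can never strictly outperform waiting at one of its nodes relies essentially on the convex-combination form \eqref{cyclicrec} together with the node-maximality assumption \eqref{assume1}, and getting the inequalities to pinch to equality (rather than merely bounding $M$) is the delicate part of the argument.
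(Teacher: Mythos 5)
Your proof is correct, and its overall skeleton matches the paper's: reduce recurrence to directed cycles in the movement-pointer map, show via assumption \eqref{assume1} that some node on any such cycle can be switched to a wait-forever action without changing the optimal values, and induct on the finite number of such conversions. Where you genuinely depart from the paper is in how the central inequality is obtained. The paper picks $\ell_1$, the cycle node with the largest \emph{waiting} value $V(\ell_1)$, bounds every stage reward along the cycle by $p_{\ell_i m}(\mu)V(\ell_1)$, and unrolls Bellman's equation around the entire cycle so that the products of transition probabilities telescope, giving $\mathbb{P}_\ell J_\mu(\ell_1)\le \mathbb{P}_\ell V(\ell_1)$ and hence $J_\mu(\ell_1)\le V(\ell_1)$, with equality then forced by optimality because waiting at $\ell_1$ is feasible. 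You instead pick the cycle node with the largest \emph{optimal} value $M$ and argue purely locally: your one-step convex-combination recursion pinches $M\le B_{e_{k^*}}$, then \eqref{assume1} together with $fS_e\ge 0$ bounds $B_{e_{k^*}}$ by an endpoint waiting value, which feasibility of waiting bounds back by $M$, collapsing the whole chain to equalities. Your local pinching argument is arguably cleaner and more informative --- it avoids the telescoping sum and pinpoints exactly which node's waiting value must coincide with its optimal value --- whereas the paper's global unrolling needs no comparison among the optimal values $J^*(i_k)$ and disposes of the cycle in one pass. Both arguments rest on the same two pillars (assumption \eqref{assume1} and feasibility of the wait-forever action), and both need the same closing fact, which you use implicitly and should state explicitly: since $Q_e>0$ on every edge, every stationary policy is proper, so a stationary policy whose action attains the Bellman optimum at every state is optimal --- this is the stochastic-shortest-path result the paper cites when concluding that the modified policy remains optimal.
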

\begin{proof}
	Let $\mu$ refer to an optimal stationary policy that meets the criteria of Proposition \ref{stationarypolicy}. By examining \eqref{MDP_prob}, we see that $\mathbb{P}(x_k = i, x_{k'} = i) > 0$ under policy $\mu$ only if the policy admits a cycle (possibly of length 1): $\exists \{\ell_1, \ell_2, ..., \ell_p\} \subset N$ such that under $\mu$, we have that $\mu_1(\ell_1) = \ell_2$, $\mu_1(\ell_2) = \ell_3$, ..., $\mu_1(\ell_p) = \ell_1$. For each $i \in 1,...p$, let $e_i = (\ell_i, \mu_1(\ell_i)) \in E$. 
	
	Assume the policy admits a cycle. Without loss of generality, let $\ell_1$ be the maximum valued waiting node in the loop: 
	\begin{equation} \label{bestnode}
	    \ell_1 = \arg\max\limits_{i \in \{\ell_1, \ell_2, ..., \ell_p\}} \{R_{(i,i)} - \frac{w}{Q_{(i,i)}}\} 
	\end{equation}
	and let $V(\ell_1)$ be the optimal value of the objective function in \eqref{bestnode}.
	Then, by \eqref{assume1} and because $f, S_e > 0$, for all $i \in 1,....p$ we have that $g(\ell_i,\mu) \leq p_{\ell_i m}(\mu) V(\ell_1)$.
	
	Then, we can write Bellman's equation for the policy $\mu$ as  
	\begin{multline} \label{lemma_eq1_a}
	J_{\mu}(\ell_1) =  g(\ell_1, mu) + p_{\ell_1 \ell_2} (\mu) g(\ell_2, mu) \\
	+ p_{\ell_1 \ell_2} (\mu) p_{\ell_2 \ell_3} (\mu) g(\ell_3, mu) + ... \\
	+ p_{\ell_1 \ell_2} (\mu) p_{\ell_2 \ell_3} (\mu) ... p_{\ell_p \ell_1} (\mu) (g(\ell_p, mu) + J_{\mu}(\ell_1))
	\end{multline}
	Let $\mathbb{P}_\ell$ be the probability of receiving a ride request anywhere along the cycle, i.e. $\mathbb{P}_\ell = 1 - p_{\ell_1 \ell_2} (\mu) p_{\ell_2 \ell_3} (\mu) ... p_{\ell_p \ell_1} (\mu)$. Then 
	\begin{multline}
		\mathbb{P}_\ell J_{\mu}(\ell_1) = g(\ell_1, mu) + p_{\ell_1 \ell_2} (\mu) g(\ell_2, mu) \\
		+ p_{\ell_1 \ell_2} (\mu) p_{\ell_2 \ell_3} (\mu) g(\ell_3, mu) + ... \\
		+ p_{\ell_1 \ell_2} (\mu) p_{\ell_2 \ell_3} (\mu) ... p_{\ell_p \ell_1} (\mu) g(\ell_p, mu)
	\end{multline}
	and 
	\begin{multline}
	\mathbb{P}_\ell J_{\mu}(\ell_1) \leq p_{\ell_1 m}(\mu) V(\ell_1) + p_{\ell_1 \ell_2} (\mu) p_{\ell_2 m}(\mu) V(\ell_1) \\
	+ p_{\ell_1 \ell_2} (\mu) p_{\ell_2 \ell_3} (\mu) p_{\ell_3 m}(\mu) V(\ell_1) + ... \\
	+ p_{\ell_1 \ell_2} (\mu) p_{\ell_2 \ell_3} (\mu) ... p_{\ell_p \ell_1} (\mu) p_{\ell_p m}(\mu) V(\ell_1) \\
	= 	\mathbb{P}_\ell V(\ell_1)
	\end{multline}
	Since $\mathbb{P}_\ell > 0$, this implies that $J_{\mu}(\ell_1) \leq V(\ell_1)$.
	
	Let $\mu'$ define a policy with $\mu'(j) = \mu(j)$ for all $j \in N$ with $j \neq \ell_1$, and $\mu'_1(\ell_1) = \ell_1$ and $\mu'_2(\ell_1) = \infty$. Then as shown in \eqref{lemma_eq2}, $J_{\mu'}(\ell_1) = V(\ell_1)$. Therefore, considering also the optimality of policy $\mu$, $J_{\mu}(\ell_1) = J_{\mu'}(\ell_1)$. Due to this equality, and since $\mu'(j) = \mu(j)$ for all $j \neq \ell_1$, then $\forall i \in N$, $J_{\mu'}(i) = J_{\mu}(i)$. Therefore, $\mu'$ is an optimal stationary policy. If $\mu'$ still contains a cycle (for instance, if $\mu$ contained multiple such cycles), this procedure can be repeated until the resulting policy has no such cycles; this would require at most $n$ repetitions. 
	
	Let $\mu^*$ be the first policy constructed using (potentially multiple) iterations of the above procedure, starting from the original optimal stationary policy $\mu$, such that $\mu^*$ has no cycles. Then under $\mu^*$ there does not exist any $\{\ell_1, \ell_2, ..., \ell_p\} \subset N$ with $|\{\ell_1, \ell_2, ..., \ell_p\}| >1$, such that $\mu^*_1(\ell_1) = \ell_2$, $\mu^*_1(\ell_2) = \ell_3$, ..., $\mu^*_1(\ell_p) = \ell_1$. The set $\mu^*$ eliminates all policy cycles with two or more nodes. 

	Furthermore, in $\mu^*$, for all $i$ with $\mu^*(i) = i$, then $\mu^*_2(i) = \infty$. Thus, for any subset of nodes $\{\ell_1, \ell_2, ..., \ell_p\} \subset N$, containing an arbitrary number of nodes, $p_{\ell_1 \ell_2}(\mu^*) p_{\ell_2 \ell_3}(\mu^*) ... p_{\ell_p \ell_1}(\mu^*) = 0$. Therefore, under policy $\mu^*$, for any $i \in N$, $k, k' \in \mathbb{N}$, $\mathbb{P}(x_k = i, x_{k'} = i) = 0$. Equivalently, there exists an optimal policy $\mu^*$ that has no cycles with probability 1. \qed
\end{proof}

The policy $\mu^*$ represents an optimal stationary policy with no cycles. Therefore, every decision node is visited at most once. This implies that there are at most $n$ decision stages. Furthermore, there is an optimal stationary policy over the $n$ decision stages. Therefore, from any starting node $x_0$, we have that \eqref{DP} is exactly solved by 
\begin{equation} \label{DP2}
\max\limits_{\mu(x_k)} \mathbb{E} [\sum\limits_{k=0}^{n} g(x_k, x_{k+1}, \mu_k(x_k))]
\end{equation}
This can be solved exactly by $n$ iterations of a Value Iteration (VI) algorithm, which we present in the following section.  


\section{Path-Finding Algorithm} \label{sec:alg}
This section describes the process and the Value Iteration (VI) algorithm we developed for the between-ride problem to solve for the optimal values in equation \eqref{DP2} and, therefore, equation \eqref{DP}. First, we pre-process the network graph to ensure it satisfies the condition \eqref{assume1}. This section describes the algorithm and proves that our algorithm is optimal for the appropriately pre-processed map. 

The main algorithmic steps can be described as follows:

\begin{enumerate}
\item Initially, we calculate the expected driver revenue $\tilde{J}(x)$ for waiting at every location $x$, from \eqref{eqn:3}.
\item We ``relax" each edge iteratively to see if traveling through it will provide a more optimal path for drivers on the connecting vertices. We do this by iteratively applying \eqref{Bellman2} on every edge.
\item We terminate the algorithm when no better path is found after iterating through all edges.
\end{enumerate}

The algorithm returns a provably optimal solution. This result is proven in the next Subsection. As an added benefit, the algorithm simultaneously solves for the optimal path for all drivers in the road network. The total runtime remains the same even if more drivers are added to the network.

Figure \ref{fig:2} illustrates the first few steps of running the algorithm on a simple road network. Below, we present the pseudocode for the algorithm. We use $x.stay$ to denote $\tilde{J}(x)$ from \eqref{eqn:3}, and $e$ to denote the edge $(x,y)$.

\begin{algorithm}[H]
\caption{- Iteratively relax edges in the road network.}\label{alg1}
\begin{algorithmic}[1]
\small
\For {$x \in N$}
\State $x.value \gets x.stay$
\Comment{Initialize values to be the value for staying at that node}
\State $x.next \gets -1$
\Comment{-1 means stay at the current node}
\EndFor
\For {$i:=1$ \textbf{to} $N-1$}
\For {(x, y) $\in E$}
\State relax(x, y)
\EndFor
\EndFor
\State
\Function{relax}{x, y}
\State $x.value$ $\gets (R_e-\frac{w+fS_e}{Q_e}) (1-e^{-Q_e T_e }) +  e^{-Q_e T_e} y.value $
\If {value $> x.value$}
\State $x.next \gets y$
\State $x.value \gets$ value
\EndIf
\EndFunction
\end{algorithmic}
\end{algorithm}

\begin{algorithm}[H]
\caption{- Calculate the optimal path for the driver given its starting location.}\label{}
\begin{algorithmic}[1]
\small
\Function{getpath}{x}
\State $P \gets [x]$
\While {$x.next \neq -1$}
\State $x \gets x.next$
\State $P \gets P + [x]$
\EndWhile
\Return $P$
\EndFunction
\end{algorithmic}
\end{algorithm}

The first algorithm finds the optimal next node to go for all nodes. After processing the $x.next$ values for all the nodes, we can easily find the optimal path for any starting node using the second algorithm GETPATH, which traverses the next nodes in order.


\subsection{Proof of Correctness}

From Section \ref{results}, we have that from any starting node, there is an optimal policy that results in a path that traverses at most $n$ edges in $E$ with probability 1. This result is used to establish a proof of the optimality of the path returned by the described algorithm.

From Section \ref{results}, we see that there exists an optimal policy of \eqref{DP2} that is also optimal for \eqref{DP}. Therefore, we can focus on policies that are optimal for $n$ decision stages. These policies transverse a path of length less than or equal to $n$. This motivates the following definition:  

\begin{definition}
Let $V_i(x)$ denote the optimal value attainable if you start from $x$ with $i+1$ decision stages, i.e. you start at $x$ and travel through a path of length at most $i+1$. 
\end{definition}

Recall that $J^*(x)$ is the optimal value attainable, starting from $x$, considering all potential policies and paths of potentially infinite length. By definition of the optimal policy, $\forall x \in N$, $\forall i \in \mathbb{N}$, $V_{i}(x) \leq J^*(x)$. Proposition \ref{simplepath2} concludes that $\forall x \in N$, $V_{n}(x) = J^*(x)$. Note also that $V_0(x) = \tilde{J}(x)$ according to our initialization.
This leads to the following Proposition:

\begin{proposition}
After $n$ iterations of the \textbf{for} loops in line 4 of algorithm 1 (i.e. after we relax all the edges, repeatedly for $i$ times), for each node $x$ the stored value $V(x)$ of the node satisfies the following:\\
1. It is the value obtained from a valid policy.\\
2. It is larger than or equal to $V_i(x)$.
\end{proposition}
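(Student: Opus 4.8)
The plan is to prove both assertions by induction, handling claim 1 by induction on the number of individual edge-relaxations performed and claim 2 by induction on the number of completed passes $i$ of the outer loop. Claim 1 supplies the upper bound $V(x) \le J^*(x)$ (a valid policy cannot beat the optimum), while claim 2 supplies the matching lower bound $V(x) \ge V_i(x)$. Evaluating claim 2 at $i=n$ and invoking Proposition \ref{simplepath2}, which gives $V_n(x) = J^*(x)$, then forces $V(x) = J^*(x)$ and yields correctness of the algorithm.

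For claim 1, I would maintain the invariant that after any sequence of relaxations the stored value $V(x)$ equals the expected profit of a concrete realizable policy from $x$. The natural object is a \emph{path policy}: a finite ordered list of moves $x \to y_1 \to \cdots \to y_k$ followed by waiting indefinitely (or stopping) at the last node of the list. At initialization $V(x) = \tilde{J}(x)$ from \eqref{eqn:3}, which is exactly the value of the path policy ``wait forever at $x$,'' so the invariant holds. For the inductive step, relaxing $e=(x,y)$ either leaves $V(x)$ unchanged or sets $V(x) \gets g(x,\to y) + e^{-Q_e T_e} V(y)$; using \eqref{exprev} and the hypothesis that $V(y)$ is the value of some path policy $P_y$ from $y$, this new number is precisely the value of the path policy obtained by prepending the move $x \to y$ to $P_y$. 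Framing the policy as a finite non-stationary plan, rather than a stationary map, is what sidesteps the difficulty that stale or cyclic pointers could otherwise create: even if $x$ reappears later in $P_y$, the prepended list is still a legitimate policy whose value is well defined, so $V(x) \le J^*(x)$.

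For claim 2, I would first record that the relaxation rule never decreases a stored value, since the update fires only when it strictly increases $V(x)$; hence values are monotone nondecreasing throughout execution. Next I would write the finite-horizon recursion that $V_i$ satisfies by definition, namely $V_i(x) = \max\{ \tilde{J}(x),\ \max_{y:\,(x,y)\in E,\,y\neq x} [\, g(x,\to y) + e^{-Q_e T_e} V_{i-1}(y) \,] \}$, obtained by conditioning on the first move of an optimal $(i{+}1)$-stage path. The base case $i=0$ is the initialization $V(x)=\tilde{J}(x)=V_0(x)$. For the step, fix $x$ and let the optimal $(i{+}1)$-stage path take its first move to $y$; the no-move case is immediate because $V(x)\ge \tilde{J}(x)$ always. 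By the inductive hypothesis the value at $y$ after $i-1$ passes is at least $V_{i-1}(y)$, and by monotonicity $V(y)$ is still at least $V_{i-1}(y)$ at the moment edge $(x,y)$ is relaxed during pass $i$; that relaxation therefore drives $V(x)$ up to at least $g(x,\to y) + e^{-Q_e T_e} V_{i-1}(y) = V_i(x)$, and monotonicity preserves the bound through the remainder of the pass.

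I expect the main obstacle to be the rigorous handling of claim 1 in the presence of stale values: because a relaxation of $(x,y)$ uses whatever $V(y)$ is stored at that instant, the pointer structure $x.next$ need not encode a consistent stationary policy, and a naive ``follow the pointers'' reconstruction can run into cycles whose accumulated value does not match the stored number. The path-policy device is exactly what resolves this, so the care lies in stating that invariant cleanly and verifying that prepending a single move changes the value by the relaxation formula. A secondary point requiring attention is that Algorithm \ref{alg1} relaxes edges sequentially within a pass (Gauss--Seidel rather than Jacobi updates); I would note that intra-pass updates only raise values, so the Jacobi-style bound I prove is conservative and the actual iterates dominate it.
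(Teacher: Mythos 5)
Your proof is correct and follows essentially the same route as the paper's: claim 1 by induction on individual edge relaxations (the stored value is realized by the policy ``move to $y$, then follow the policy already certified at $y$''), and claim 2 by induction on completed passes, conditioning on the first move of an optimal $(i{+}1)$-stage policy and using monotonicity of the stored values. Your explicit path-policy invariant and the Gauss--Seidel remark tighten two points the paper's terser argument leaves implicit, but they do not change the structure of the argument.
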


\begin{proof}
To prove the first property, we use induction and consider what happens when a single edge is relaxed. For the base case, at the start $V(x) = \tilde{J}(x)$, and hence $V(x)$ is the value for a valid policy (waiting at the node $x$ until a ride request is received). In the inductive case, suppose we relax edge $(x, y)$, changing the value of node $x$ from $V_{i}(x)$ to $V(x)$, with $V(x)$ as defined in line 9 of Algorithm \ref{alg1}. From the previous step, $V(y)$ is a value associated with a valid policy. It thus follows that $V(x)$ is the value of valid policy since $(x,y) \in E$.

To prove the second property, we induct on the value of $i$. In our base case $i=0$, we initialized each node with $V(x) = \tilde{J}(x) = V_0(x)$, so the second property holds immediately. Then for our inductive case, suppose value $V_i(x)$ is achieved by a valid policy $\mu$. After we relax all of the edges, we aim to show through the inductive step that $V(x) \geq V_{i+1}(x)$. Without loss of generality, assume the action associated with node $x$ under the optimal policy with $i+1$ decision stages is $\mu_{1,1}(x) = y$, i.e. the chosen direction from node $x$ in stage 1 is towards node $y$. Then, by the principal of optimality, $V_{i}(y)$ must be the value achieved by the same policy $\mu$ starting at $y$; otherwise we will be able to achieve a better value of $V_{i+1}(x)$ by selecting $\mu_{1,1}(x) = y$ and then following the policy that corresponds to $V_{i}(y)$, which violates the optimality of $V_{i+1}(x)$.

From this we know that 
$V_{i+1}(x) = (R_e-\frac{w+f S_e}{Q_e}) (1-e^{-Q_e T_e }) +  e^{-Q_e T_e} V_i(y) $
with $e = (x,y)$. By our assumption in the induction step, the value for node $y$ after $i$ iterations of the for loop must satisfy $V(y) \geq V_{i}(y)$, because our algorithm only increases the stored value $V(y)$. 

Then in the $i$-th iteration, when we update edge $(x, y)$, we have the inequality:
\begin{align*}
V(x)
& \geq (R_e-\frac{w+f S_e}{Q_e}) (1-e^{-Q_e T_e }) +  e^{-Q_e T_e} V(y) \\
& \geq (R_e-\frac{w+f S_e}{Q_e}) (1-e^{-Q_e T_e }) +  e^{-Q_e T_e} V_{i}(y) \\
& = V_{i+1}(x)
\end{align*}
which completes the proof. \qed
\end{proof}

After $n$ iterations of relaxing all edges, the values stored in each node $V(x) = V_{n}(x)$. As demonstrated in Proposition \ref{simplepath2}, $V_n(x) = J^*(x)$. 

Finally, observe that the $x.next$ values for each node provide the optimal stationary policy $\mu_1(x)$ for each node $x \in N$. Therefore, by tracing the sequence of decisions $\{x_0, x_1 = x_0.next, x_2 = x_1.next ...\}$ we can find a path that follows the optimal policy, starting from an initial node $x_0 \in N$.





The algorithm gives a worst-case runtime of $O(n |E|)$, since relaxing each edge takes constant time, and we perform $O(n |E|)$ total relaxations in lines 4 and 5 of Algorithm 1. 
We can assume a constant upper bound $C$ on the number of edges connected to any particular vertex in a transportation map, because the number of roads converging at any particular intersection has some upper bound that does not depend on the number of nodes, i.e. $|E| \leq C n$. Therefore the worst-case runtime is $O(n^2)$. \footnote{In practice, there are additional improvements that can be used to decrease the runtime. Firstly, we can terminate the algorithm if the values $V(x)$ are unchanged after a full iteration. Our runtime will then be $O(l_{max}E)$, where $l_{max}$ is the maximum length of an optimal path in the road network. This greatly reduces the runtime especially for large networks, since typically each optimal path only spans a fraction of the set of all nodes.}

\section{Implementation} \label{sec:implementation}

To demonstrate the feasibility and scalability of our algorithm, we implemented it using road network data from Boston, MA, and New York, NY. We used open source data from Open Street Maps \cite{openstreetmap} to obtain the road network data. We used posted speed limits as initial values for $S_e$ and $T_e$. We divide the map into a grid and set experimental values for $R_e$ and $Q_e$ in each gridbox. Future research could utilize TNC data to analyze the results and profit improvements for this algorithm with realistic values of $R_e$ and $Q_e$.  

\begin{figure}
\begin{center}
\includegraphics[width=1\columnwidth]{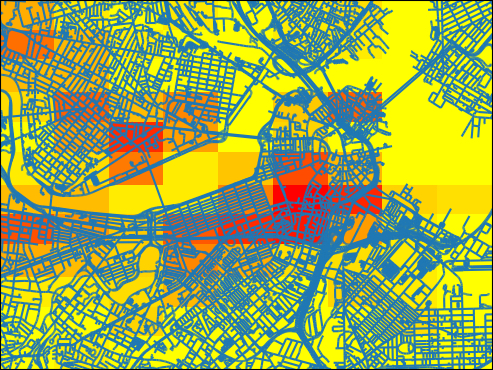}\\
\vspace{0.2cm}
\includegraphics[width=1\columnwidth]{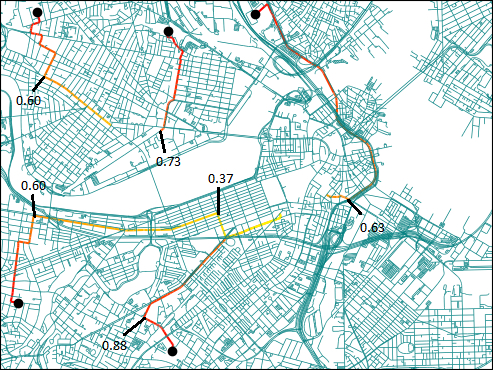}
\caption{Implementation of the algorithm in Boston, MA. The top image represents the expected value for drivers if they were to stay at that location until they get the next passenger request $\tilde{J}(x)$. Red and yellow boxes representing higher and lower values of $\tilde{J}(x)$ respectively. The bottom image represents the routes returned by optimal policies for drivers at starting nodes represented by the black dot. The color along the route represents the probability that the driver still does not have a passenger request when driving along the route at that location, with red and yellow representing higher and lower probabilities, respectively. The probability values are also marked at various points along the routes. For example, the number 0.37 is marked along one route; this implies that the driver has a 63\% chance of receiving a ride request before arriving at that location along the route. The end of the line indicates a location where the driver will optimally wait until receiving a ride request.}
\label{fig:3}
\end{center}
\end{figure}

Figure \ref{fig:3} demonstrates an example output for the city of Boston. We recorded the time taken to run the algorithm, using a standard laptop with a Intel Core i7-5500U CPU @ 2.40GHz$\times$4 processor and 8GB RAM. Overall, the program took less than 3 seconds to implement the optimization algorithm, demonstrating the feasibility of our approach. The relevant information is shown in Figure \ref{table:1}.

\begin{figure}
\small
\centering
\begin{tabular}{ |l c|  }
 \hline
 \multicolumn{2}{|c|}{City: \textbf{Boston, MA}} \\
 \hline
 Network size & 12541 Edges, 9072 Vertices\\
 Data-process time & 1.48s\\
 Algorithm time & 1.23s\\
 Total area & 41.1km$^2$\\
 Top-Right coordinates & 42.38N, 71.03W\\
 Bottom-Left coordinates & 42.33N, 71.12W\\
 \hline
 \multicolumn{2}{|c|}{City: \textbf{New York, NY}} \\
 \hline
 Network size & 20428 Edges, 13973 Vertices\\
 Data-process time & 4.25s\\
 Algorithm time & 2.31s\\
 Total area & 112.3km$^2$\\
 Top-Right coordinates & 40.82N, 73.92W\\
 Bottom-Left coordinates & 40.70N, 74.02W\\
 \hline
\end{tabular}
\caption{Details for two implementations of our algorithm. Here, the data-process time includes the time taken to parse the xml data of the road network returned by OpenStreetMaps, and store it as a graph data structure. The algorithm time includes the time taken to calculate the optimal policy for all drivers given the road network graph and applicable parameters}
\label{table:1}
\end{figure}

We can also compare the optimal driver value obtained by our approach to the value obtained by other routing decisions. For example, we can compare the optimal value to the value associated with a route where the driver takes the shortest path towards the node with the highest expected value $\tilde{J}(x)$ (we call this route the ``shortest-route"). The shortest-route represents a reasonable heuristic route for drivers in the between-ride period: head towards the highest value location. 
In the base case, the expected value of profit from the next ride, averaged across all nodes in Boston when driving along the optimal route is \$6.78, while that of the shortest-route is \$6.44. The optimal between-ride solution provides a 5\% average improvement. At certain nodes, the optimal route increases the expected value or profit by 25-50\%. 

The relative value of the optimal solution is significantly higher in periods of congestion. If we assume that average vehicle speeds are $1/2$ of the posted speed limits, then the average nodal value for the optimal policy is \$5.61, versus \$4.99 for a policy that takes the shortest path to the highest value node. In this case, the presented algorithm allows a 12\% improvement in driver profits. These calculations represent a back-of-the-envelope effort to test the value of our algorithm. Future research could use real-world price and traffic data in order to more accurately measure the benefits of our between-ride algorithm and to understand the conditions that influence its value.  


\section{Conclusion}
This paper models the between-ride routing problem for private transportation providers. We seek to optimize routing to maximize the expected value of profits for a driver that does not currently have a passenger and who is awaiting their next ride request. Our algorithm can account for various factors, including the pickup rate at different locations, surge pricing, fuel costs, and traffic conditions. 

We model the decision problem as an dynamic program with an uncertain number of stages before termination; this is equivalent to the stochastic shortest path problem. We show that under reasonable conditions, the between-ride problem can be solved to optimality by solving a simpler finite-horizon dynamic program. We present an algorithm using an iterative technique related to Value Iteration, 
and illustrate the feasibility of this algorithm by implementing it on road networks from Boston and New York City. 


There are several interesting areas for future research related to the between-ride routing problem. Our algorithm focuses on a single between-ride period. Future research could focus on the driver's optimization problem over multiple rides, considering variability in the probability distribution of ride destinations from different origins.

Our algorithm focuses on the case where the behavior of individual drivers does not substantively change the rate of passenger ride requests or the value of rides in different locations. When there are multiple drivers in the network, driver behavior could impact prices and pickup rates at different locations. In this case, the optimal between-ride behavior of drivers would anticipate the behavior and trajectory of other drivers. Extensions could use tools from game theory or mean field theory to develop optimal driver strategies that anticipate the decisions of other drivers. 
\addtolength{\textheight}{-12cm}   







\bibliographystyle{IEEEtran} 
\bibliography{IEEEexample}

\end{document}